\documentclass[journal,10pt]{IEEEtran}

\usepackage{color}  
\usepackage[final]{graphicx} 
\usepackage{amsmath,amssymb,amscd,latexsym,dsfont,amsthm,algorithm,algorithmic,amsbsy,epsfig,bbm,mathrsfs,fancyhdr,fancyvrb} 

\usepackage{upgreek,textgreek}
\usepackage{auto-pst-pdf}
\usepackage{caption}
\usepackage{psfrag}
\usepackage{epstopdf} 
\usepackage{tabls}
\usepackage[noadjust]{cite}
\usepackage{color,comment}
\usepackage{balance}
\usepackage{float}
\usepackage{subfigure}
\usepackage{placeins}
\usepackage[top=0.75in, bottom=1in, left=0.625in, right=0.625in]{geometry}

\newcommand{\tr}[1]{\mathrm{#1}}

\newcounter{tempEquationCounter}
\newcounter{thisEquationNumber}
{\setcounter{equation}{\value{tempEquationCounter}}
\end{figure*}
}%

\newtheorem{theorem}{Theorem}
\newtheorem{lemma}{Lemma}
\newtheorem{proposition}{Proposition}

\begin{document}
\title{Limits on the Capacity of In-Band Full Duplex Communication in Uplink Cellular Networks}

\author{%
Itsikiantsoa Randrianantenaina, Hesham Elsawy, and~Mohamed-Slim~Alouini.\\
\thanks{%
I. Randrianantenaina, H. Elsawy and M.-S. Alouini are with the Computer, Electrical, and Mathematical Science and Engineering (CEMSE) Division, King Abdullah University of Science and Technology (KAUST), Thuwal, Saudi Arabia. [e-mails: \{itsikiantsoa.randrianantenaina, hesham.elsawy, slim.alouini\}@kaust.edu.sa].}%
\thanks{%
This work was funded in part by King Abdullah University of Science and Technology (KAUST).
}%
}%

\maketitle
\thispagestyle{empty}
   
\begin{abstract}

Simultaneous co-channel transmission and reception, denoted as in-band full duplex (FD) communication, has been promoted as an attractive solution to improve the spectral efficiency of cellular networks. 
However, in addition to the self-interference problem, cross-mode interference (i.e., between uplink and downlink) imposes a major obstacle for the deployment of FD communication in cellular networks. More specifically, the downlink to uplink interference represents the performance bottleneck for FD operation due to the uplink limited transmission power and venerable operation when compared to the downlink counterpart. While the positive impact of FD communication to the downlink performance has been proved in the literature, its effect on the uplink transmission has been neglected. This paper focuses on the effect of downlink interference on the uplink transmission in FD cellular networks in order to see  whether FD communication is beneficial  for the uplink transmission or not, and if yes for which type of network. To quantify the expected performance gains, we derive a closed form expression of  the maximum  achievable uplink capacity in FD cellular networks. In contrast to the downlink capacity which always improves with FD communication, our results show that the uplink performance may improve or degrade depending on the associated network parameters. Particularly, we show that the intensity of base stations (BSs) has a more prominent effect on the uplink performance than their transmission power.






\end{abstract} 
   

\section{Introduction}

The fifth generation (5G) of cellular networks are expected to provide a tangible evolution in the network performance and supported services. In terms of network performance, it is expected to achieve a 1000-fold capacity improvement with at least 100-fold leap in the peak data rate when compared to the state-of-the-art 4G cellular systems.  Hence, cellular network operators are seeking all means to increase the spectrum utilization and cope with the ambitious 5G requirements. In this regards, in-band full-duplex (FD) communication has captivated the attention due to its potential to double the per-link spectral efficiency. FD communication allows transceivers to transmit and receive on the same time-frequency resource block, making traditional duplexing techniques (e.g., frequency division duplexing) obsolete and providing a better utilization for the available spectrum. In general, the implementation of FD communication relies on the self-interference (SI) cancellation capabilities that allow transceivers  to keep an acceptable isolation between the transmit and receive circuits while transmitting and receiving on the same time-frequency resource block, thanks to the recent advances in transceiver design~\cite{Hong2014}. Therefore, in order  harvest the expected spectral efficiently gain from  viable FD communication, tremendous efforts are invested to enhance the SI cancellation \cite{Lee2015, Sabharwal2014}.

In the context of cellular networks, instead of dividing the entire bandwidth (BW) (i.e., in time or in frequency) between uplink and downlink transmission, FD communication improves the spectrum utilization by allowing uplink and downlink to simultaneously occupy the entire bandwidth. This makes the implementation of FD communication in cellular networks disputable, even if perfect SI is achieved, due to the imposed cross-mode interference (i.e., interference between uplink and downlink transmissions).  The studies in  \cite{Alves2014, Lee2015 , Alammouri2015} show that, despite the imposed cross-mode interference, FD communication can effectively improve the spectral efficiency for the downlink. However, its effect on the uplink performance has been overlooked. We argue that the performance gain provided by FD communication in the downlink may come on the expense of a significant degradation in the uplink performance, due to the high disparity between the uplink and downlink operation. For instance, the downlink BS with high transmission power capability may create severe interference  on the uplink due to the limited transmit power of the users' equipment (UEs). To support our argument, we derive an upper-bound of the achieved uplink rate under FD operation. In addition, the obtained upper-bound is exploited to benchmark power control and interference cross-mode management techniques in FD cellular networks. The achieved upper-bound with the power control is compared to the traditional half-duplex (HD) communication, obtain in \cite{ElSawy2014}, to visualize the maximum gain that could be achieved via FD communication in cellular networks. The results confirms that FD can have significant negative impact on uplink transmissions, especially in a macro network environment where the base stations (BSs) are characterized with their high transmission powers. Nevertheless, FD communication can improve the uplink capacity in small cell networks which are characterized by their lower transmission power, in spite of their higher density, when compared to the macro cell network.

This work is focused on the effect of in-band FD communication on the uplink transmission due to two reasons. First, the explicit effect of FD communication on the uplink transmission has been neglected  in the literature. Second,  uplink is the bottleneck for FD communication due to the limited transmission power of the UEs and the strong downlink interference  \cite{Alammouri2015}. Therefore, depending on the network parameters, this paper shows  whether FD communication is beneficial for uplink transmission or not, and if yes, quantifies the gain in terms of data rate. Since recent studies show that the BSs in cellular network exhibit random patterns rather than idealized grids, the analysis in this paper is based on stochastic geometry toolset. More particularly, we assume a single-tier cellular network in which the BSs locations constitute a Poisson point process (PPP) with intensity $\lambda$. Beside simplifying the analysis, the PPP has shown to give accurate estimation for the actual cellular network performance \cite{tractable_app, martin_ppp}. For the sake of tractability, the analysis is based on the hybrid approach proposed in \cite{Heath2013} for cellular networks, which assume a circular BS coverage with area $\lambda^{-1}$ for the test BS and a PPP for the interfering BSs. The hybrid model has been also used in \cite{Jeff_D2D} and showed a close match to the PPP simulation. Further, the downlink interference is approximated with a Gamma distribution using second moment matching obtained via stochastic geometry analysis for the BSs interference. The accuracy Gamma approximation for the aggregate interference in the hybrid model has been validated in \cite{Heath2013}. We further validate the Gamma approximation via simulation. We derive an upper-bound on the uplink rate which is based on Shannon's formula and rate-optimal water filling power control assuming that the uplink  transmitter has perfect instantaneous information about the interference as well as channel state information (CSI) towards the receiving BS.  


 The rest of this paper is organized as follows, the system model and methodology of analysis are presented in Section~\ref{sec:SystemModel}. Then,  Section~\ref{sec:InterferenceApproximation} details  the second moment matching approximation of the interference with a Gamma distribution. In Section~\ref{sec:Optimization}, we formulate and solve the optimization problem to get a closed-form expression of the maximum achievable uplink capacity. Numerical and simulation results are showed in Section~\ref{sec:Results} before concluding the paper in Section~\ref{Conclusion}.

\section{System Model}\label{sec:SystemModel}
\subsection{Network Model}

We consider a single-tier cellular network in which the BSs' locations follow a PPP $\Psi = \{ x_i \}$, where $ x_i \in \mathbb{R}^2$ denotes the $i^{th}$ BS location, with intensity $\lambda$. All BSs are equipped with single antennas and transmit with a constant power $P_{BS}$.  The UEs follow a stationary point process $\Phi$, where each user has an average power constraint $\bar{P}$. UEs associate  to the BS that provides the highest average signal strength. In this case, it is easy to show that the average distance between a user and his serving BS is given by $\bar{R} = {1}/({2 \sqrt{\lambda}})$. We focus the analysis on the uplink connection of a test cellular user located at the average distance $\bar{R}$ who is communicating with the test BS located at the origin. According to Slivnyak's theorem there is no loss in generality to focus on the performance of the BS located at the origin.    Following \cite{Jeff_D2D, Heath2013},  we approximate the test BS coverage area by a circle with a radius of $r_0 = \frac{1}{\sqrt{\pi \lambda}}$, to satisfy the average area of $\lambda^{-1}$ per BS. 

We assume general power-law path loss model in which the transmitted signal power decays at the rate $r^{-\eta}$ with the distance $r$, where $\eta > 2$ is  the path loss exponent depending on the environment. In addition to the path loss attenuation, the signal powers experience i.i.d. Gamma distributed channel gains, denoted by $\alpha \sim Gamma(m, \Omega)$ where $m$ and $\Omega$ are, respectively, the shape parameter and mean, \textit{i.e}, the  probability density function {\em pdf} of $\alpha$ is given by
\begin{equation}
f_{\alpha}(\alpha)=\frac{1}{\Gamma(m)}\left(\frac{m}{\Omega}\right)^{m}\alpha^{m-1}\exp\left(-\frac{m\alpha}{\Omega}\right); \;\;\;\;\;\alpha > 0.
\end{equation}
We allow the intended link to have a different fading coefficient $\alpha_0 \sim Gamma(m_o, \Omega_o)$. We ignore the effect of the SI and the uplink network interference on the FD uplink capacity as we intend to focus on the downlink to uplink interference and obtain an upper bound on the achievable uplink capacity. In this case, the upper-bound on the uplink capacity, which is defined by Shannon's capacity, is given by:
\begin{equation}
C^*_{FD}= B \mathbb{E}\left[ \log_2\left(1+ \frac{P^* h}{ \underset{x_i \in \Psi}{\sum} P_{BS} \alpha_i \left\| x_i \right\|^{-\eta} +N_{0}}\right)\right]
\label{FD_capacity}
\end{equation}
where $B$ is the total  BW occupied by uplink transmission, $P^*$ is the optimal transmit power, $h = \frac{\alpha_o}{(2\sqrt{\lambda})^\eta} \sim Gamma(m_o, {\Omega_o}/{(2\sqrt{\lambda})^\eta})$ is the intended channel gain including fading and path-loss attenuation, $\left\| .\right\|$ denotes the Euclidean norm, and $N_0$ is the noise power.

\begin{figure}[t]
\psfrag{spl}[c][c][0.6]{~\! $\frac{1}{\sqrt{\pi\lambda}}$}
\psfrag{sl}[c][c][0.6]{~\! $\frac{1}{2\sqrt{\lambda}}$}
\psfrag{User served by the BS}[c][c][0.8]{User served by the BS}
\psfrag{Interfering link from other BSs}[c][c][0.8]{Interfering link from other BSs}
\psfrag{Interfering BSs}[c][c][0.8]{BSs}
\psfrag{Uplink link}[c][c][0.8]{Uplink link}
\begin{center}
\scalebox{0.7}{\includegraphics[width=1\columnwidth]{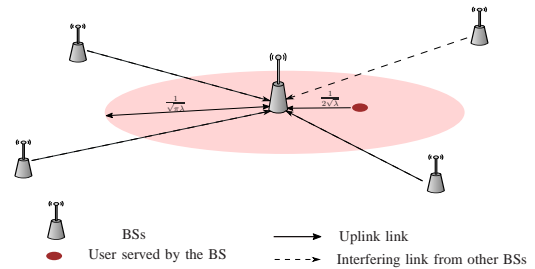}}
\end{center}
\caption{The hybrid system model with circular test BS coverage and PPP interfering BSs.}	
\label{fig:SystemModel}
\end{figure}

 \subsection{Methodology of Analysis}

In this paper, we aim to derive the optimal power control policy that maximizes the uplink capacity given in \eqref{FD_capacity}. Following \cite{sousa}, the optimal power policy and the maximum capacity can be obtained by solving the following optimization poblem 
	\begin{equation}
\begin{aligned}\label{eq:Optimization_problem}
& \underset{P(h,I)}{\tr{max}} C = B \mathbb{E} \left[ \log_2\left( 1+ \frac{P(h,I) h }{I+N_0} \right) \right] \\
& \tr{s.t.~~} \mathbb{E}[P(h,I)] \leq \overline{P}.
\end{aligned}
\end{equation}
where  $P(h,I)$, is the optimization variable denoting the test user's instantaneous transmit power, $I = \underset{x_i \in \Psi}{\sum} P_{BS} \alpha_i \left\| x_i \right\|^{-\eta}$ is the aggregate downlink to uplink interference, and the expectations in \eqref{eq:Optimization_problem} is computed w.r.t. both $h$ and $I$ as:
\begin{align}\label{eq:averageCapacity}
&C^{*}_{FD}=B\int_{h}\!\int_{I}\log_{2}\left(1+\frac{hP^*(h,I)}{I+N_{0}}\right) f_{h}(h)f_{I}(I)\mathrm{d}h\mathrm{d}I, \quad \text{and} \notag \\
&\mathbb{E}[P(h,I)] = \int_{h}\!\int_{I}P^*(h,I)f_{h}(h)f_{I}(I)\mathrm{d}h\mathrm{d}I
\end{align} 
where $f_I(.)$ and $f_h(.)$ are the {\em pdfs} of, respectively, $I$ and $h$.  Since there is no closed form expression for the {\em pdf} of $I$ \cite{survey}, we can neither compute the averages in \eqref{eq:averageCapacity} nor solve the optimization problem in \eqref{eq:Optimization_problem}. To overcome this problem, we approximate the downlink interference via a gamma distributed random variable using moment matching. We detail the approximation steps and verify its accuracy in Section~\ref{sec:InterferenceApproximation}. Then we derive the optimal power policy that maximizes the uplink capacity in Section~\ref{sec:Optimization} and obtain a closed form expression for the achievable rate. 

\subsection{Capacity Benchmarking}

In order to conduct a reliable and fair comparison that visualizes the performance gain/degradation imposed by FD communication on the uplink capacity, we have to choose a stable HD benchmark. In this paper, we use the HD uplink capacity presented in \cite{ElSawy2014} to benchmark the FD uplink capacity derived.  The HD uplink capacity presented in \cite{ElSawy2014} is given by:
\begin{align}
&\bar{C}_{HD}= \frac{B}{2} \mathbb{E}\left[\log_{2}\left(1+\frac{\rho h}{I_u +N_{0}}\right) \right]
\label{HD_capacity}
\end{align} 
where $\rho$ is a fixed uplink received power at the BS obtained by means of path loss inversion power control. Note that the expectation in \eqref{HD_capacity} is w.r.t. the channel gain $h$ and the uplink to uplink interference $I_u$.  We select the benchmark in \eqref{HD_capacity} for the following four reasons. First, solving an optimization problem in the form of  \eqref{eq:Optimization_problem}, with uplink interference is not tractable due to the coupling between the uplink interference and the optimal power control. Second, the HD performance presented in \cite{ElSawy2014} is based on stochastic geometry analysis with PPP BSs, and hence, it is obtained with a similar system model as the one in this paper. Third, the received power at the BS in \eqref{HD_capacity} is constant and independent of the user location, and hence, can be compared with \eqref{eq:Optimization_problem} which is obtained for a user located at the average distance $\bar{R}$. Fourth, it was shown in \cite[equation (12)]{ElSawy2014} that the capacity in the form of \eqref{HD_capacity} for uplink users is independent of the BSs intensity $\lambda$ and $\rho$, and hence it provide a unified benchmark for the FD operation at different BSs' intensities\footnote{The reason that the uplink capacity is independent from $\rho$ is because the increased received signal power $\rho$ is balanced with an equivalent increase in the uplink interference $I_u$. Similarly, the increased number of interfering UE by increasing $\lambda$ is balanced by an equivalent reduction in the transmission power per UE due to the shorter distance to the serving BS.}. 

As a consequence of the selected benchmark, if the HD capacity, obtained by \eqref{HD_capacity}, exceeds the FD capacity, obtained by \eqref{eq:Optimization_problem}, we can conclude that  FD communication significantly degrades the uplink performance. This is because \eqref{eq:Optimization_problem}  is optimal and ignores uplink interference while \eqref{HD_capacity} in sub-optimal and accounts for the uplink interference. For the same reason,  if the FD capacity, obtained by \eqref{eq:Optimization_problem}, exceeds the HD capacity, by \eqref{HD_capacity}, this does not necessarily indicate a performance improvement provided by FD communication. Therefore, define the FD capacity with fixed transmission power as
\small
\begin{equation}
\bar{C}_{FD}= B \mathbb{E}\left[ \log_2\left(1+ \frac{\overline{P} h}{ \underset{x_i \in \Psi}{\sum} P_{BS} \alpha_i \left\| x_i \right\|^{-\eta}+N_{0} }\right)\right]
\label{FD_fixed}
\end{equation}
\normalsize
The fixed transmit power $\overline{P}$ is the user average transmit power constraint. The capacity defined iin \eqref{FD_fixed} can be used to get a better inference of the performance improvement imposed by FD communication when compared to the HD capacity defined in \eqref{HD_capacity}. It is worth highlighting that the main purpose of this paper is to shed light on the negative impact that may be imposed by FD communication on the uplink performance, and the analysis in this paper perfectly fits this purpose. 

 \section{Downlink to Uplink Interference Approximation}\label{sec:InterferenceApproximation}
 
In this section we obtain approximate distribution of $I$ by matching the gamma distribution parameters to the actual moments of the interference $I$. Although we cannot calculate the {\em pdf} of the aggregate downlink to uplink interference $(I)$, we can still obtain the moments via Laplace transform of the {\em pdf} (LT) of $(I)$, which can be obtained by stochastic geometry analysis for the depicted system model. The {\em pdf} (LT) of $(I)$ is given in the following lemma:

\begin{lemma} \label{lem_LT}
In the depicted system model, the LT of the aggregate downlink to uplink interference in given by
\small
\begin{equation}
\mathcal{L}_{I}(s)=\exp\left(-2\pi\lambda\!\int_{\frac{1}{\sqrt{\pi \lambda}}}^{\infty}\! x\left(  1-\left( 1+\frac{s\Omega  P_{\text{BS}}  x^{-\eta}}{m}\right)^{-m}\right)                \mathrm{d} x    \right).
\label{LT_eq}
\end{equation}
\normalsize
\end{lemma}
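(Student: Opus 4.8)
The plan is to view the aggregate interference $I = \sum_{x_i \in \Psi} P_{\mathrm{BS}}\,\alpha_i\,\|x_i\|^{-\eta}$ as a functional of an independently marked Poisson point process and to compute its Laplace transform with the probability generating functional (PGFL). First I would make the interference geometry precise: under the hybrid model the test BS occupies the disk $B(\mathbf{0}, r_0)$ with $r_0 = 1/\sqrt{\pi\lambda}$, so the interfering BSs form a PPP of intensity $\lambda$ on $\mathbb{R}^2\setminus B(\mathbf{0}, r_0)$, and each interferer $x_i$ carries the i.i.d. mark $\alpha_i \sim \mathrm{Gamma}(m,\Omega)$, independent of the locations. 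Then I would write
\begin{equation*}
\mathcal{L}_I(s) = \mathbb{E}\!\left[ e^{-sI} \right] = \mathbb{E}\!\left[ \prod_{x_i \in \Psi} \exp\!\left(-s P_{\mathrm{BS}}\,\alpha_i\,\|x_i\|^{-\eta}\right) \right],
\end{equation*}
which is legitimate since $I$ is an absolutely convergent sum of nonnegative terms (the radial integral below being finite).

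Next I would take the expectation over the marks first: because the $\alpha_i$ are i.i.d. and independent of $\Psi$, each factor is replaced by $\mathbb{E}_\alpha\big[\exp(-sP_{\mathrm{BS}}\alpha\|x_i\|^{-\eta})\big]$, and the mean--shape parameterization of the Gamma law used in the text gives $\mathbb{E}_\alpha[e^{-t\alpha}] = (1 + t\Omega/m)^{-m}$, so this factor equals $\big(1 + s\Omega P_{\mathrm{BS}}\|x_i\|^{-\eta}/m\big)^{-m}$. Applying the PGFL of the PPP, $\mathbb{E}\big[\prod_{x\in\Psi} g(x)\big] = \exp\!\big(-\lambda\int (1-g(x))\,\mathrm{d}x\big)$, over the domain $\mathbb{R}^2\setminus B(\mathbf{0}, r_0)$ yields
\begin{equation*}
\mathcal{L}_I(s) = \exp\!\left( -\lambda \int_{\mathbb{R}^2\setminus B(\mathbf{0}, r_0)} \left( 1 - \left(1 + \frac{s\Omega P_{\mathrm{BS}}\|x\|^{-\eta}}{m}\right)^{-m} \right) \mathrm{d}x \right).
\end{equation*}
Since the integrand depends on $x$ only through $\|x\|$, I would pass to polar coordinates, $\int_{\mathbb{R}^2\setminus B(\mathbf{0}, r_0)}(\cdot)\,\mathrm{d}x = 2\pi\int_{r_0}^{\infty}(\cdot)\,x\,\mathrm{d}x$, and substitute $r_0 = 1/\sqrt{\pi\lambda}$ to obtain exactly \eqref{LT_eq}.

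Since this is a textbook stochastic-geometry manipulation, there is no genuine obstacle; the only points that deserve care are (i) correctly identifying the exclusion region $B(\mathbf{0}, r_0)$ forced by the circular-coverage approximation, so the radial integral starts at $1/\sqrt{\pi\lambda}$ rather than at $0$, and (ii) justifying the interchange of the expectation over the marks with the product/PGFL, which is licensed by the theory of independently marked PPPs (equivalently, Campbell's theorem on the exponent). I would also note that the radial integral converges: as $x\to\infty$ one has $1 - (1 + c x^{-\eta})^{-m} = O(x^{-\eta})$, hence the integrand $x\big(1-(1+cx^{-\eta})^{-m}\big) = O(x^{1-\eta})$, which is integrable at infinity because $\eta > 2$, and it is clearly bounded near $x = r_0$; this makes $\mathcal{L}_I(s)$ well defined for all $s \ge 0$.
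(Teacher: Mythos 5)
Your proposal is correct and follows essentially the same route as the paper's proof: factor the Laplace transform over the marked PPP, average over the Gamma fading (the paper phrases this via the moment generating function $\mathcal{M}_\alpha$), and apply the PGFL with the exclusion radius $1/\sqrt{\pi\lambda}$ before converting to polar coordinates. Your added remarks on convergence and the mark/PGFL interchange are sound but not needed beyond what the paper states.
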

\begin{proof}
See \textbf{Appendix~\ref{app_LT}}
\end{proof}

From the LT, the $n^{th}$ moment of the aggregate downlink to uplink interference can be obtained as:
\small
\begin{equation}
\mathbb{E}[I^n]=(-1)^n\frac{\partial^n \mathcal{L}_{I}(s)}{\partial s^n}\Big \vert_{s=0}
\end{equation}
Hence, the first moment is obtained as:
\begin{align}
\mathbb{E}[I]  & = \left. -2\pi\lambda\! \frac{ d \int_{\frac{1}{\sqrt{\pi \lambda}}}^{\infty}\! x\left(  1-\left( 1+\frac{s\Omega  P_{\text{BS}}  x^{-\eta}}{m}\right)^{-m}\right)}{ds } \right|_{s=0}   \notag \\
&\overset{(a)}{=} \left. -2\pi\lambda\!\int_{\frac{1}{\sqrt{\pi \lambda}}}^{\infty}\!\tfrac{\partial \left(x\left(  1-\left( 1+\frac{s\Omega  P_{\text{BS}}  x^{-\eta}}{m}\right)^{-m}\right)\right)}{\partial s}\mathrm{d} x \right|_{s=0} \nonumber\\
& = \frac{2 (\pi\lambda)^\frac{\eta}{2} \Omega P_{\text{BS}}}{\eta-2},
\label{mean}
\end{align}
\normalsize
where  $(a)$ follows from Leibniz integral rule. Similarly, the second moment can be computed as:
\begin{align}
& \mathbb{E}[I^2] =\frac{2 (\pi\lambda)^\eta \Omega^{2} P_{\text{BS}}^{2} }{\eta-2}\left[\frac{2 }{\eta-2}+\frac{ (m+1)(\eta-2)}{2m(\eta-1)}\right].
\label{scnd}
\end{align}
\normalsize

Using the moment matching, the distribution of the downlink to uplink interference is given in the following proposition:
\begin{proposition}  \label{prop}
In the depicted system model, the aggregate downlink to uplink interference can be approximated by a Gamma random variable $Gamma(m_I,\Omega_I)$, where $m_I$ is the shape parameter and $\Omega_I$ is the mean, in which
\begin{equation}
\Omega_{I}=\frac{2 \Omega (\pi \lambda)^{\frac{\eta}{2}} P_{\text{BS}} }{(\eta-2)}, \quad \text{and}
\end{equation}
\begin{equation}
 m_{I} =  \frac{4 (\eta -1) m \Omega }{\eta  (\eta +\eta  m-4)+4-4 (\eta -1) m \Omega }.
\end{equation}
\end{proposition}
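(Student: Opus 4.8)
The plan is to do the standard two-parameter (second-moment) matching. Since the exact \emph{pdf} of $I$ is unavailable but its first two moments are in closed form from \eqref{mean} and \eqref{scnd}, I would posit $I \approx \tilde I$ with $\tilde I \sim Gamma(m_I,\Omega_I)$ and fix $(m_I,\Omega_I)$ by equating $\mathbb{E}[\tilde I]=\mathbb{E}[I]$ and $\mathbb{E}[\tilde I^2]=\mathbb{E}[I^2]$. The only facts needed about the Gamma law in the paper's mean/shape parametrization are: $\mathbb{E}[\tilde I]=\Omega_I$ and $\mathrm{Var}(\tilde I)=\Omega_I^2/m_I$, hence $\mathbb{E}[\tilde I^2]=\Omega_I^2(m_I+1)/m_I$. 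These two identities turn the moment-matching conditions into the explicit solution $\Omega_I=\mathbb{E}[I]$ and $m_I=\dfrac{(\mathbb{E}[I])^2}{\mathbb{E}[I^2]-(\mathbb{E}[I])^2}$.

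From here the first claim is immediate: substituting $\mathbb{E}[I]$ from \eqref{mean} into $\Omega_I=\mathbb{E}[I]$ gives $\Omega_I=\dfrac{2\Omega(\pi\lambda)^{\eta/2}P_{\text{BS}}}{\eta-2}$. For the shape parameter I would substitute both \eqref{mean} and \eqref{scnd} into $m_I=(\mathbb{E}[I])^2/(\mathbb{E}[I^2]-(\mathbb{E}[I])^2)$; note that the $\tfrac{2}{\eta-2}$ term inside the bracket of \eqref{scnd} is exactly $(\mathbb{E}[I])^2$ up to the common prefactor, so the subtraction in the denominator collapses \eqref{scnd} to a single term proportional to $\tfrac{m+1}{m(\eta-1)}$, after which the $(\pi\lambda)^\eta$, $\Omega$, and $P_{\text{BS}}$ factors cancel between numerator and denominator.

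The only real work, and the step I expect to be fiddly, is the algebraic simplification that recasts the resulting ratio of rational functions of $\eta$ and $m$ into the compact denominator $\eta(\eta+\eta m-4)+4-4(\eta-1)m\Omega$ quoted in the statement: one clears the common factor $m(\eta-1)(\eta-2)^2$, expands $(\eta-2)^2=\eta^2-4\eta+4$, and regroups terms. I would also record the sanity checks that make the approximation meaningful: $\eta>2$ guarantees the integrals in \eqref{mean}--\eqref{scnd} converge, and one should verify that the stated denominator is positive on the operating parameter range so that $m_I>0$ and $Gamma(m_I,\Omega_I)$ is well defined. Finally, per the discussion following Lemma~\ref{lem_LT}, the moments themselves are justified by differentiating $\mathcal{L}_I(s)$ in \eqref{LT_eq} at $s=0$ (with differentiation under the integral sign licensed by Leibniz's rule, exactly as already used for \eqref{mean}), so nothing beyond the excerpt is required.
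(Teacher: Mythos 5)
Your method is exactly the paper's: the proof there is nothing more than the two\hyp{}parameter matching $\Omega_I=\mathbb{E}[I]$ and $m_I=\mathbb{E}[I]^2/\bigl(\mathbb{E}[I^2]-\mathbb{E}[I]^2\bigr)$ applied to \eqref{mean} and \eqref{scnd}, and your observation that the $\tfrac{2}{\eta-2}$ term inside the bracket of \eqref{scnd} reproduces $\mathbb{E}[I]^2$, so that the variance collapses to the single term proportional to $\tfrac{m+1}{m(\eta-1)}$, is the right computation.

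There is, however, a concrete problem with the step you defer as ``fiddly.'' You correctly note that $(\pi\lambda)^{\eta}$, $\Omega$ and $P_{\text{BS}}$ all cancel in the ratio --- $m_I$ is a squared mean over a variance and is therefore scale\hyp{}free --- and carrying the algebra through gives $\mathbb{E}[I^2]-\mathbb{E}[I]^2=(\pi\lambda)^{\eta}\Omega^{2}P_{\text{BS}}^{2}\,\tfrac{m+1}{m(\eta-1)}$ and hence $m_I=\tfrac{4m(\eta-1)}{(m+1)(\eta-2)^{2}}$, with no $\Omega$ anywhere. This cannot be ``recast'' into the expression quoted in Proposition~\ref{prop}: the printed denominator $\eta(\eta+\eta m-4)+4-4(\eta-1)m\Omega$ coincides with $(m+1)(\eta-2)^{2}$ only when $\Omega=1$, and the printed numerator likewise carries an $\Omega$. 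So your plan is internally inconsistent --- you cannot both have $\Omega$ cancel and land on a formula containing $\Omega$ --- and the clearing/regrouping you promise would fail for general $\Omega$. The resolution is that the stated $m_I$ carries a spurious $\Omega$ (harmless in the paper's numerics, which use $\Omega=1$); your proof should either present the scale\hyp{}free form $m_I=\tfrac{4m(\eta-1)}{(m+1)(\eta-2)^{2}}$ or flag the discrepancy, rather than claim to recover the printed expression. Your remaining points (convergence for $\eta>2$, positivity of $m_I$, Leibniz differentiation of $\mathcal{L}_I$ at $s=0$) are fine and, with the corrected formula, the positivity check is automatic.
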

\begin{proof}
The proposition is obtained by calculating the gamma distribution parameters via the moments of $I$ are given in \eqref{mean} and \eqref{scnd}, in which $\Omega_I = \mathbb{E}[I]$ and $m_I = \frac{\mathbb{E}[I]^2 }{ \mathbb{E}[I^2]-\mathbb{E}[I]^2 }$.
\end{proof}


The accuracy of the approximation in Proposition~\ref{prop} is verified in Fig.~\ref{fig:MomentMatching_Macro} as well as in Section~\ref{sec:Results}. Looking into Proposition~\ref{prop} we can see that the mean of the aggregate downlink to uplink interference scales with both the BSs intensity and transmit power but with different rates. Since, $\eta > 2$ the effect of $\lambda$ is more prominent than the effect of the transmit power. For instance, for a typical value of $\eta=4$, a FD uplink transmission would have a better performance if the BS transmit power is increased 10 times and the BSs intensity is decreased 4 times.  To this end, since the shape parameter of the distribution of $(I)$ is independent from $\lambda P_{\text{BS}}$ and the mean is directly proportional to $\lambda^\frac{\eta}{2} P_{\text{BS}}$, it can be concluded that the aggregate interference is inversely proportional to the factor $\lambda^\frac{\eta}{2} P_{\text{BS}}$. However, we note that the useful signal power also scales with $\bar{R}^{-\eta} \propto \lambda^\frac{\eta}{2}$.

\begin{figure}[h]
\psfrag{PDF}[c][c][1.2]{~\! PDF}
\psfrag{Monte Carlo}[c][c][0.9]{~\! Monte Carlo}
\psfrag{Gamma}[c][c][0.9]{~\! Gamma}
\begin{center}
\scalebox{0.55}{\includegraphics{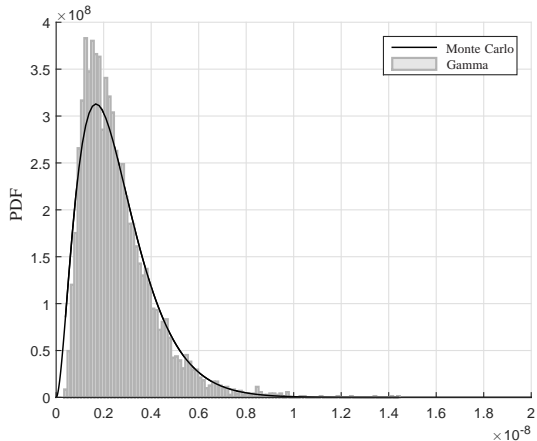}}
\caption{Verification of the approximation of the interference power with Gamma distribution, $r_{0}\!=\!300\text{meters}$, $\lambda\!=\!5 \text{km}^{-2}$, $P_{\text{BS}}\!=\!20\ \text{Watts}$}
\label{fig:MomentMatching_Macro}
\end{center}
\end{figure}

 \section{Optimal capacity under average user transmit power constraint}\label{sec:Optimization}
 
By virtue of the gamma approximation for the {\em pdf} of $I$,  we can derive the optimal uplink power control policy, solve the optimization problem in \eqref{eq:Optimization_problem} and obtain the maximum uplink rate in closed form. It can be observed that the optimization problem in \eqref{eq:Optimization_problem} is  similar to the classic "water-filling" power allocation \cite{Goldsmith05}.  To find the optimal power control policy, we construct the Lagragian of the optimization problem \eqref{eq:Optimization_problem}  as follows
\small
\begin{align}
\mathcal{L}\left(P(h,I),\mu \right)&=\int_{h}\!\int_{I}\left(\mu P(h,I)-B\log_{2}\left(1+\frac{hP(h,I)}{I+N_{0}}\right)\right)\nonumber\\
&\times f_{h}(h)f_{I}(I)\mathrm{d}h\mathrm{d}I\quad-\quad\mu \overline{P},
\end{align}
\normalsize
where $\mu$ is the Lagrange multiplier for the inequality constraint. Since, the optimal transmit power satisfies the Karush-Kuhn-Tucker (KKT) optimal conditions, we have
\begin{equation}
\frac{\partial \mathcal{L}\left(P(h,I),\mu\right)}{\partial P(h,I)}=0.
\end{equation}
Thus,
\begin{equation}
\int_{h}\!\int_{I}\left(\mu - \frac{Bh}{\ln 2\left(I+N_{0}+hP(h,I)\right)}\right)f_{h}(h)f_{I}(I)\mathrm{d}h\mathrm{d}I=0,
\end{equation}
which implies that the optimal power control policy is given by:
\begin{equation}\label{eq:OptimalTransmitPower}
P(h,I)=\left(\frac{B}{\mu_0  \ln2}-\frac{I+N_{0}}{h}\right)^{+}
\end{equation}

The optimal power control policy given in \eqref{eq:OptimalTransmitPower} is a "water-filling" policy, \textit{i.e.} the transmit power and data rate of the user are increased when channel conditions are favorable (less interference and better uplink channel), decreased when channel conditions are not favorable, and the user stops transmitting when 
\begin{equation}
I+N_{0}\geq \frac{Bh}{\mu_0 \ln 2}
\end{equation}
The optimal power control factor $\mu_0$ is obtained solving the constraint at boundary
\begin{equation}\label{eq:LambdaEquation}
\int_{h}\!\int_{I}\left(\frac{B}{\mu_0  \ln2}-\frac{I+N_{0}}{h}\right)^{+} f_{h}(h)f_{I}(I)\mathrm{d}h\mathrm{d}I = \overline{P}
\end{equation}

Combining \eqref{eq:averageCapacity} and \eqref{eq:OptimalTransmitPower}, the maximum achievable rate can be expressed as:
\begin{equation}\label{eq:optimalCapacity}
C^{*}_{FD}=\frac{B}{\ln 2}\int_{h}\!\int_{I}\ln\left(\frac{hB}{(I+N_{0})\mu_0 \ln 2}\right) f_{h}(h)f_{I}(I)\mathrm{d}h\mathrm{d}I,
\end{equation}
where $\frac{hB}{(I+N_{0})\mu_0 \ln 2} \geq 1$.  To simplify the calculation of the maximum achievable rate, we characterize the resultant distribution of the channel-to-interference-plus-noise-ratio (CINR) in the following lemma:

\begin{lemma} \label{lem_gam}
 Let $\gamma = \frac{h}{I+N_o}$ be the resultant CINR for the depicted system model, then $x$ has the following distribution
 
 \begin{equation} \label{eq:PDF_Ration_Gamma}
 f_{\gamma}(x)=\frac{k^{m_{0}}}{\mathcal{B}\left(m_{0},m_{I}\right)}\left(1+kx\right)^{-m_{0}-m_{I}}x^{m_{0}-1},
\end{equation}
where $\mathcal{B}(a,b) = \int_0^1 t^{a-1} (1-t)^{b-1} dt$ is the beta function, and $k=\frac{(2\sqrt{\lambda})^{\eta}m_{0}(\Omega_{I}+N_{0})}{m_{I}\Omega_{0}}$.

 \end{lemma}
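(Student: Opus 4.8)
The plan is to express $\gamma$ as a constant multiple of the ratio of two independent Gamma random variables and then apply the standard change-of-variables formula for ratios, which produces precisely a generalized beta-prime (inverted-beta) density of the form \eqref{eq:PDF_Ration_Gamma}.

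First I would dispose of the additive noise. By Proposition~\ref{prop} the interference $I$ is (approximately) $Gamma(m_I,\Omega_I)$; adding the deterministic noise power, I would model $I+N_0$ as a Gamma variate with shape parameter $m_I$ and mean $\Omega_I+N_0$, i.e.\ the second-moment Gamma approximation of $I$ with its mean shifted by $N_0$. Then I would write each Gamma in scale--shape form: $h=\theta_h\widetilde h$, where $\widetilde h$ has the standard density $u^{m_0-1}e^{-u}/\Gamma(m_0)$ and $\theta_h=\Omega_0/(m_0(2\sqrt{\lambda})^{\eta})$ is the scale of $h$ (recall $h\sim Gamma(m_0,\Omega_0/(2\sqrt{\lambda})^{\eta})$), and likewise $I+N_0=\theta_J\widetilde J$ with $\widetilde J$ a standard Gamma of shape $m_I$ and $\theta_J=(\Omega_I+N_0)/m_I$. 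Independence of the intended fading and the interference then gives $\gamma=(\theta_h/\theta_J)\,(\widetilde h/\widetilde J)$.

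Next I would compute the density of $T:=\widetilde h/\widetilde J$ via $f_T(t)=\int_0^{\infty} v\,f_{\widetilde h}(tv)\,f_{\widetilde J}(v)\,\mathrm{d}v$; substituting the two standard Gamma densities collapses this to the elementary integral $\int_0^{\infty} v^{m_0+m_I-1}e^{-v(1+t)}\,\mathrm{d}v=\Gamma(m_0+m_I)(1+t)^{-m_0-m_I}$, so that $f_T(t)=t^{m_0-1}(1+t)^{-m_0-m_I}/\mathcal{B}(m_0,m_I)$. Finally, with $k:=\theta_J/\theta_h$ the scaling rule $f_\gamma(x)=k\,f_T(kx)$ yields exactly \eqref{eq:PDF_Ration_Gamma}, and simplifying $k$ using $\theta_J=(\Omega_I+N_0)/m_I$ and $\theta_h=\Omega_0/(m_0(2\sqrt{\lambda})^{\eta})$ confirms $k=\frac{(2\sqrt{\lambda})^{\eta}m_0(\Omega_I+N_0)}{m_I\Omega_0}$.

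The calculation itself is routine --- the Gamma-function integral and the Jacobian of $x\mapsto kx$ are mechanical --- so the only point that genuinely needs care, and the one I would flag as the crux, is the modelling step: $I+N_0$ is strictly a shifted Gamma rather than a Gamma, so treating it as $Gamma(m_I,\Omega_I+N_0)$ is an approximation. I would justify it as being no coarser than the second-moment Gamma approximation of $I$ already adopted in Proposition~\ref{prop} (and exact in the interference-limited regime $N_0\to 0$), after which the ratio computation goes through verbatim.
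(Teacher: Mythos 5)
Your proposal is correct and follows essentially the same route as the paper, which simply invokes the bivariate transformation for the ratio of two independent Gamma variates with different shape and scale parameters; your single-integral ratio formula and subsequent rescaling by $k=\theta_J/\theta_h$ is the same computation and reproduces both the beta-prime density and the stated $k$. You also make explicit the one step the paper leaves implicit, namely absorbing the constant noise $N_0$ by treating $I+N_{0}$ as $Gamma\left(m_I,\Omega_I+N_0\right)$, which is exactly the approximation needed to obtain the factor $(\Omega_I+N_0)$ in $k$.
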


This is obtained using the method of bivariate random variable transformation considering the summation and the ration of two Gamma random variables with different mean and scale parameter.

Exploiting Lemma~\ref{lem_gam},  we can rewrite the maximum achievable rate in \eqref{eq:optimalCapacity} as 
\begin{equation}\label{eq:optimalCapacity2}
C^{*}_{FD}=\frac{B}{\ln 2}\int_{\frac{1}{a_{0}}}^{\infty}\!\ln\left(a_{0}x\right) f_{x}(x)\mathrm{d}x,
\end{equation}
where $a_{0}=\frac{B}{\mu_0 \ln 2}$. The maximum achievable rate is then characterized via the following theorem:

\begin{theorem}
In the depicted system model, the maximum achievable uplink rate under full-duplex communication is upper-bounded by:
\small
\begin{equation}\label{optimalCapacity3}
C^{*}_{FD}=\frac{Ba_{0}^{m_{I}}{}_3\text{F}_2\left(m_{I},m_{I},m_{0}+m_{I}; 1+m_{I},1+m_{I};-\frac{a_{0}}{k}\right)}{\mathcal{B}\left(m_{0},m_{I}\right)m_{I}^{2}k^{m_{I}}\ln 2}.\end{equation}
\normalsize
where $a_{0}$ is numerically calculated by solving the following equation
\small
\begin{align}\label{eq:Expression_Alpha0}
\frac{a_{0}^{m_{I}+1}}{\mathcal{B}\left(m_{0},m_{I}\right)k^{m_{I}}}\Bigg(\frac{{}_2\text{F}_1 \left(m_{I},m_{I}+m_{0},1+m_{I},-\frac{a_{0}}{k}\right)}{m_{I}}\nonumber\\
 -\frac{{}_2\text{F}_1 \left(m_{I}+1,m_{I}+m_{0},2+m_{I},-\frac{a_{0}}{k}\right)}{m_{I}}\Bigg)=\overline{P}
\end{align}
\normalsize

\end{theorem}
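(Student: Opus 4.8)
The plan is to insert the CINR density $f_\gamma$ from Lemma~\ref{lem_gam} into the reduced rate expression \eqref{eq:optimalCapacity2}, convert the resulting logarithmic moment integral into a generalized hypergeometric function, and then apply the same manipulations to the water-level equation \eqref{eq:LambdaEquation} to characterize $a_0$. First I would substitute \eqref{eq:PDF_Ration_Gamma} into \eqref{eq:optimalCapacity2} and change the variable to $w=1/(a_0 x)$, which maps $x=1/a_0$ to $w=1$, sends $x\to\infty$ to $w\to 0^{+}$, and turns $\ln(a_0 x)$ into $-\ln w\ge 0$ on the range of integration. Collecting the powers of $a_0 w+k$, this should reduce \eqref{eq:optimalCapacity2} to
\[
C^{*}_{FD}=\frac{B}{\ln 2}\,\frac{a_0^{m_I}}{\mathcal{B}(m_0,m_I)\,k^{m_I}}\int_0^1 (-\ln w)\,w^{m_I-1}\Bigl(1+\tfrac{a_0}{k}\,w\Bigr)^{-m_0-m_I}\mathrm{d}w,
\]
so that everything comes down to evaluating $J(s,c,p):=\int_0^1 (-\ln w)\,w^{s-1}(1+cw)^{-p}\,\mathrm{d}w$ at $s=m_I$, $c=a_0/k$, $p=m_0+m_I$.

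\textbf{The key integral.} To evaluate $J$, I would start from $\int_0^1 w^{s-1}(1+cw)^{-p}\,\mathrm{d}w$, which (expanding $(1+cw)^{-p}=\sum_{n\ge 0}\frac{(p)_n}{n!}(-c)^n w^n$ and integrating term by term) equals $\sum_{n\ge 0}\frac{(p)_n(-c)^n}{n!\,(s+n)}$, and this in turn equals $\frac{1}{s}\,{}_2\text{F}_1(p,s;s+1;-c)$ by the Euler integral representation of the Gauss function together with $(s)_n/(s+1)_n=s/(s+n)$. Differentiating this identity with respect to $s$ brings down exactly the factor $-\ln w$ in the integrand, giving
\[
J(s,c,p)=\sum_{n\ge 0}\frac{(p)_n(-c)^n}{n!\,(s+n)^2}=\frac{1}{s^2}\sum_{n\ge 0}\frac{(p)_n\,(s)_n^{2}\,(-c)^n}{(s+1)_n^{2}\,n!}=\frac{1}{s^2}\,{}_3\text{F}_2\!\bigl(p,s,s;\,s+1,s+1;\,-c\bigr),
\]
where the middle step uses $(s)_n^{2}/(s+1)_n^{2}=s^2/(s+n)^2$ to rewrite $1/(s+n)^2$ as a ratio of Pochhammer symbols. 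Setting $s=m_I$, $c=a_0/k$, $p=m_0+m_I$ and absorbing the constants then yields \eqref{optimalCapacity3}. The term-by-term steps are valid for $|a_0/k|<1$; for $|a_0/k|\ge 1$ the expression is read through the analytic continuation of ${}_3\text{F}_2$.

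\textbf{Water level and the bound.} Next I would use $P(h,I)=(a_0-1/\gamma)^{+}$ from \eqref{eq:OptimalTransmitPower} to rewrite the constraint \eqref{eq:LambdaEquation} as $\mathbb{E}\bigl[(a_0-1/\gamma)^{+}\bigr]=\int_{1/a_0}^{\infty}(a_0-1/x)\,f_\gamma(x)\,\mathrm{d}x=\overline{P}$. Under the same substitution $w=1/(a_0 x)$ the factor $a_0-1/x=a_0(1-w)$ splits the integral into $\int_0^1 w^{m_I-1}(1+\tfrac{a_0}{k}w)^{-m_0-m_I}\mathrm{d}w$ minus $\int_0^1 w^{m_I}(1+\tfrac{a_0}{k}w)^{-m_0-m_I}\mathrm{d}w$, each of which is of the $\tfrac{1}{s}\,{}_2\text{F}_1(p,s;s+1;-c)$ type used above; collecting the common prefactor $\tfrac{a_0^{m_I+1}}{\mathcal{B}(m_0,m_I)k^{m_I}}$ produces an equation of the form \eqref{eq:Expression_Alpha0}. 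Because $\mathbb{E}[(a_0-1/\gamma)^{+}]$ is continuous and strictly increasing in $a_0$, tending to $0$ as $a_0\to 0^{+}$ and to $\infty$ as $a_0\to\infty$, this equation has a unique root for every $\overline{P}>0$, so $a_0$ is well defined and can be found numerically. Finally, since $C^{*}_{FD}$ is the optimum of \eqref{eq:Optimization_problem}, a problem that neglects the self-interference and uplink-to-uplink interference present in reality, it dominates the true achievable FD uplink rate, which gives the claimed upper bound.

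\textbf{Main obstacle.} Every step except the evaluation of $J$ is routine bookkeeping; the crux is recognizing the logarithmic moment integral as a ${}_3\text{F}_2$. Since $m_I$ in Proposition~\ref{prop} is generically non-integer, there is no partial-fraction shortcut with integer exponents, and the closed form relies on the parameter-differentiation device that upgrades the ${}_2\text{F}_1$ to a ${}_3\text{F}_2$ with a doubled numerator parameter and a doubled denominator parameter --- together with the convergence/continuation bookkeeping that comes with it.
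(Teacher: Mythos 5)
Your proposal is correct, and it reaches the theorem by a genuinely different route than the paper. The paper evaluates the logarithmic integral $\int_{1/a_0}^{\infty}\ln(a_0x)(1+kx)^{-m_0-m_I}x^{m_0-1}\mathrm{d}x$ by integration by parts with $u=\ln(a_0x)$, finding $v$ from the tabulated incomplete-Beta-type integral \cite[Eq.~(3.194)]{Gradshteyn2007}, applying the ${}_2\text{F}_1$ argument transformation \cite[Eq.~(15.3.7)]{Abramowitz1964}, and then integrating by parts once more to land on the ${}_3\text{F}_2$; you instead substitute $w=1/(a_0x)$ to map the integral onto $[0,1]$, expand $(1+cw)^{-p}$ termwise, and use the parameter-differentiation device $1/(s+n)^2=(s)_n^2/\bigl(s^2(s+1)_n^2\bigr)$ to assemble the ${}_3\text{F}_2$ directly. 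Your computation of the prefactor $\frac{Ba_0^{m_I}}{\mathcal{B}(m_0,m_I)k^{m_I}m_I^2\ln 2}$ checks out, and your convergence caveat (termwise expansion needs $|a_0/k|<1$, otherwise read the result via analytic continuation, or equivalently via the Euler-type integral representation which holds for all $a_0/k>0$) is a point the paper silently skips by relying on table lookups. You also go beyond the paper's appendix in two useful ways: the paper never derives the water-level equation, whereas you obtain it from $\mathbb{E}[(a_0-1/\gamma)^{+}]=\overline{P}$ by the same substitution, and you justify existence and uniqueness of $a_0$ by monotonicity, which the paper leaves implicit. One remark: your derivation gives the second term of the constraint as $\frac{1}{m_I+1}{}_2\text{F}_1\bigl(m_I+1,m_I+m_0;2+m_I;-\frac{a_0}{k}\bigr)$, while the printed \eqref{eq:Expression_Alpha0} divides that term by $m_I$; your bookkeeping is the consistent one (the Euler integral $\int_0^1w^{s-1}(1+cw)^{-p}\mathrm{d}w=\frac{1}{s}{}_2\text{F}_1(p,s;s+1;-c)$ applied with $s=m_I+1$), so the paper's displayed denominator appears to be a typo rather than a flaw in your argument.
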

\begin{proof}
See \textbf{Appendix~\ref{app:integral}}
\end{proof}
Theorem 1 gives an upper bound on the capacity of uplink communication under FD operation, which is main contribution of the paper. This upper-bound can be used to quantify  the potential gains to be expected from FD communication, under a certain network setup, in the uplink direction. It is important to highlight that the obtained upper bound is obtained by assuming that the user has perfect instantaneous information about the aggregate downlink to uplink interference, which may not be practical. Nevertheless, this upper bound can benchmark the performance of more practical yet suboptimal power control techniques as shown in the next section. 



 \section{Simulation and Numerical  Results}\label{sec:Results}
 
 In this section, we evaluate the FD performance in two network setups, namely, the macro-cells and the micro-cells network. The macro-cells setup can represent the cellular network in sub-urban areas in which the BSs have high transmit power and are sparse in the  spatial domain.  In contrast to the macro-cells, the micro-cells can represent the cellular network in urban areas in which the BSs have lower transmit powers and are denser in the spatial domain. In each case,  we plot the upper-bound obtained in Theorem~1 against the conventional HD performance obtained in \cite{ElSawy2014} to visualize the maximum performance gain that can be harvested from FD communication. We also conduct Monte Carlo simulations to verify the upper-bound obtained in Theorem~1. 
Unless otherwise stated, the network parameters are selected as follows; the noise and uplink interference power is set to $N_{0}=10^{-9}$, the path loss exponent $\eta=4$. The total bandwidth is $B=180$KHz, and the parameters of the fading for the interfering channel is  $(m_{I},\Omega_{I})=(1,1)$, and for the reference user channel $(m_{h},\Omega_{h})=(2,(2\sqrt{\lambda})^{\eta})$.

In the first simulation example, given in Fig.~\ref{fig:Capacity_power_micro}, we study the variation of the uplink capacity with the transmit power of the BSs in a micro-cells environment.   In this case study, the FD can provide a tangible performance improvement compared to HD, the gain in uplink capacity is up to $400\%$ higher than the HD uplink for the optimized FD, and $350\%$  for the FD with fixed transmit power. However, the figure shows that  as expected, the FD performance gain monotonically decays with the transmission power of the BSs due to the increased uplink interference. It is worth noting that the HD capacity is fixed with the BS transmission power as the HD operation does not impose cross mode interference as indicated in \eqref{HD_capacity}.  The figure also shows that the gain w.r.t. the optimal FD compared to the fixed power FD enhances with the increase of the BSs transmit power,  which indicates the importance of estimating the downlink interference in order to achieve higher performance.


The second simulation example, given in Fig.~\ref{fig:Capacity_power_macro}, shows the uplink capacity in a macro-cells environment which is 10 times less dense but can have up to 40 times higher transmission power  compared to the micro BSs. In this case, the figure shows that the FD communication  is beneficial only when both BSs transmit power and density are low, otherwise the uplink capacity is deeply affected.

 
\begin{figure}[h]
\psfrag{FD-WF-IA}[c][c][0.9]{~\!$C^{*}_{FD}$}
\psfrag{FD-WF-IA-MC}[c][c][0.9]{~\!$C^{*}_{FD}$-MC}
\psfrag{FD-[7]}[c][c][0.9]{~\!$\quad\ \bar{C}_{FD}$}
\psfrag{HD-[7]}[c][c][0.9]{~\!$\quad\ \bar{C}_{HD}$}
\psfrag{lambda1}[c][c][0.9]{~\!$\lambda=10^{-5}$}
\psfrag{lambda2}[c][c][0.9]{~\!$\lambda=5*10^{-5}$}
\psfrag{User uplink capacity (kbps)}[c][c][1.1]{~\! Uplink Capacity (kbps)}
\psfrag{BS-Power}[c][c][1.2]{~\! $P_{\text{BS}}$(Watts)}

\begin{center}
\scalebox{0.55}{\includegraphics{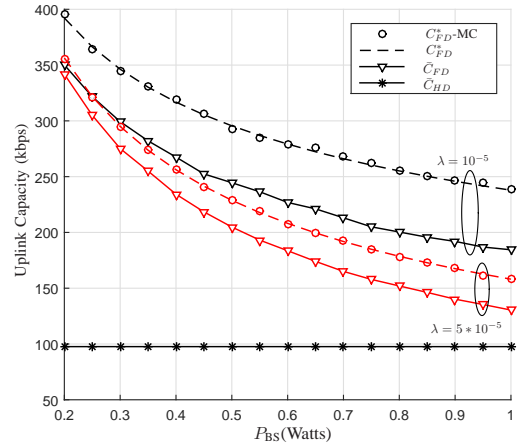}}
\caption{Variation of the uplink capacity according to the transmit power at the BSs  for the case of microcells, $\lambda=5\ 10^{-5}m^{-2}$}
\label{fig:Capacity_power_micro}
\end{center}
\end{figure}
\begin{figure}[h]
\psfrag{FD-WF-IA}[c][c][0.9]{~\!$C^{*}_{FD}$}
\psfrag{FD-WF-IA-MC}[c][c][0.9]{~\!$C^{*}_{FD}$-MC}
\psfrag{FD-[7]}[c][c][0.9]{~\!$\quad\ \bar{C}_{FD}$}
\psfrag{HD-[7]}[c][c][0.9]{~\!$\quad\ \bar{C}_{HD}$}
\psfrag{lambda1}[c][c][0.9]{~\!$\lambda=10^{-6}$}
\psfrag{lambda2}[c][c][0.9]{~\!$\lambda=5*10^{-6}$}
\psfrag{User uplink capacity (kbps)}[c][c][1.1]{~\! Uplink Capacity (kbps)}
\psfrag{BS-Power}[c][c][1.2]{~\! $P_{\text{BS}}$(Watts)}
\begin{center}
\scalebox{0.55}{\includegraphics{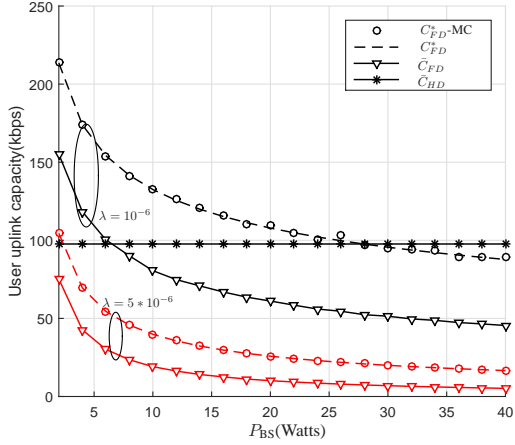}}
\caption{Variation of the uplink capacity according to the transmit power at the BSs  for the case of macrocells, $\lambda=5\ 10^{-6}m^{-2}$}
\label{fig:Capacity_power_macro}
\end{center}
\end{figure}

\section{Conclusion} \label{Conclusion}
In this paper, the mean and shape parameter of the downlink to uplink interference power are derived  as function of the network parameters and it has been discovered that the density of the BSs has stronger  effect on the  downlink to uplink interference than the transmit power of the BSs. The results show that for micro-cells, in spite of the high vulnerability of the uplink against interference, FD communication can provide significant gain in terms of uplink data rate compared to HD communication. However, for macro-cells, FD communication is essentially harmful to the uplink transmission, unless both BSs density and transmit power are very low.
\vskip -30pt
\appendices
\renewcommand{\thesubsection}{\Alph{subsection}}
\section{Proof of Lemma~\ref{lem_LT}} \label{app_LT}
\vskip -10pt
\begin{align}
\mathcal{L}_{I}(s)&=\mathbb{E}\left[e^{-Is}\right]\nonumber=\mathbb{E}\left[\exp\left(-s\sum_{i\in \psi}P_{\text{BS},i}\alpha_{i}\vert x_{i}\vert^{-\eta}\right)\right],\nonumber\\
&=\mathbb{E}\left[\prod_{i\in \psi}\exp\left(-sP_{\text{BS},i}\alpha_{i}\vert x_{i}\vert^{-\eta}\right)\right].
\end{align}
Using probability generating functional (PGFL) \cite{Haenggi2009}, we get
\begin{equation}
\mathcal{L}_{I}(s)=\mathcal{G}\!\left(\mathbb{E}_{\alpha}\!\left[e^{-sP_{\text{BS}}\alpha\vert x\vert^{-\eta}}\right]\right).
\end{equation}
Assuming that the base station placement follows a Poisson Point Process (PPP) with density $\lambda$, with a minimum interfering distance $\frac{1}{\sqrt{\pi\lambda}}$~\cite{Heath2013}, we get
\small 
\begin{align}
\mathcal{L}_{I}(s)&=\exp\left(-2\pi\lambda\!\int_{\frac{1}{\sqrt{\pi\lambda}}}^{\infty}\! x\left(  1-\mathbb{E}_{\alpha}\!\left[e^{-sP_{\text{BS}}\alpha x^{-\eta}}\right]\right)                \mathrm{d} x    \right)\nonumber\\
&=\exp\left(-2\pi\lambda\!\int_{\frac{1}{\sqrt{\pi\lambda}}}^{\infty}\! x\left(  1-\mathcal{M}_{\alpha}\!\left(sP_{\text{BS}} x^{-\eta}\right)\right)                \mathrm{d} x    \right).
\end{align}
\normalsize
$\mathcal{M}_{\alpha}(.)$ is the Moment Generating Function of the random variable $\alpha$. Considering that the fading coefficient $\alpha$ follows a Gamma distribution $\alpha \sim Gamma(m, \Omega)$, we get~\eqref{LT_eq}.

\section{Proof of Theorem 1}\label{app:integral}
Expanding \eqref{eq:optimalCapacity2}, the expression of the maximum achievable rate becomes
\footnotesize                
\begin{equation}\label{optimalCapacity3}
C^{*}=\frac{Bk^{m_{h}}}{\mathcal{B}\left(m_{h},m_{I}\right)\ln 2}\int_{\frac{1}{a_{0}}}^{\infty}\!\ln\left(a_{0}x\right)\left(1+kx\right)^{-m_{h}-m_{I}}x^{m_{h}-1} \mathrm{d}x,
\end{equation}
\normalsize
This integral can be solved with integration by part considering $u=\ln(a_{0}x)$ and $\mathrm{d}v=\left(1+kx\right)^{-m_{0}-m_{I}}x^{m_{0}-1}$. Therefore, $\mathrm{d}u=\frac{1}{x}$ . Using  \cite[Eq.~(3.194)]{Gradshteyn2007} to find $v$, then the relation given in \cite[Eq.~(15.3.7)]{Abramowitz1964},  and  finally  applying  integration by parts, we obtain
\small       
\begin{align}
I&=\left[\frac{\Gamma(1+m_{0})\Gamma(m_{0})}{{m_{0}k^{m_{0}}}\Gamma(m_{0}+m_{I})}\left(\ln(a_{0}x)-\ln(x)\right)\right]_{x=\frac{1}{a_{0}}}^{\infty}\nonumber\\
&+\frac{a_{0}^{m_{I}}{}_3\text{F}_2\left(m_{I},m_{I},m_{0}+m_{I}; 1+m_{I},1+m_{I};-\frac{a_{0}}{k}\right)}{m_{I}^{2}k^{m_{0}+m_{I}}}\nonumber\\
&=\frac{a_{0}^{m_{I}}{}_3\text{F}_2\left(m_{I},m_{I},m_{0}+m_{I}; 1+m_{I},1+m_{I};-\frac{a_{0}}{k}\right)}{m_{I}^{2}k^{m_{0}+m_{I}}}
\end{align}
\normalsize

\bibliographystyle{IEEEtran}
\nocite{*}
\bibliography{Fundamental_Limit_FullDuplex_HetNet}


\end{document}